\title{Statistical characterization of the chordal product determinant of Grassmannian codes}
\author{Javier Álvarez-Vizoso, Carlos Beltrán, Diego Cuevas, Ignacio Santamaría,\\ V{\' i}t Tu{\v c}ek and Gunnar Peters}
\newcommand{\Addresses}{{
  \bigskip
  \footnotesize

\textsc{Department of Communications Engineering, Universidad de Cantabria,  Santander, Spain}\par\nopagebreak
  \textit{E-mail addresses}: \texttt{javier.alvarezvizoso@unican.es},  \texttt{diego.cuevas@unican.es},\texttt{i.santamaria@unican.es}
  \medskip

\textsc{Department of Mathematics, Statistics and Computing, Universidad de Cantabria, Santander, Spain}\par\nopagebreak
  \textit{E-mail address}: \texttt{carlos.beltran@unican.es}

  \medskip

 \textsc{Department of Wireless Algorithms, Huawei Technologies, Kista, Sweden}\par\nopagebreak
  \textit{E-mail addresses}: \texttt{vit.tucek@huawei.com},  \texttt{gunnar.peters@huawei.com}

}}
\begin{document}
\maketitle

\begin{abstract}
  We consider the chordal product determinant, a measure of the distance between two subspaces of the same dimension. In information theory, collections of elements in the complex Grassmannian are searched with the property that their pairwise chordal products are as large as possible. We characterize this function from an statistical perspective, which allows us to obtain bounds for the minimal chordal product and related energy of such collections. 
\end{abstract}

\def\gg{\left|g\right|^2}
\def\hg{\left|h\right|^2}
\def\dg{\left|d\right|^2}

\def\C{{\bf C}}
\def\N{{\bf N}}
\def\E{{\mathcal E}}
\def\S{{\bf S}}
\def\X{{\bf X}}
\def\x{{\bf x}}
\def\z{{\bf z}}
\def\f{{\bf f}}
\def\s{{\bf s}}
\def\e{{\bf e}}
\def\y{{\bf y}}
\def\n{{\bf n}}
\def\T{{\mathcal T}}
\def\L{{\mathscr L}}
\def\G{{\bf G}}
\def\U{{\bf U}}
\def\H{{\bf H}}
\def\V{{\bf V}}
\def\u{{\bf u}}
\def\v{{\bf v}}
\def\h{{\bf h}}
\def\I{{\bf I}}
\def\R{{\bf R}}
\def\Q{{\bf Q}}
\def\W{{\bf W}}
\def\F{{\bf F}}
\def\Y{{\bf Y}}
\def\H{{\bf H}}
\def\X{{\bf X}}
\def\A{{\bf A}}
\def\B{{\bf B}}
\def\D{{\bf D}}
\def\Z{{\bf Z}}
\def\Y{{\bf Y}}
\def\tr{\operatorname{tr}}
\def\det{\operatorname{det}}
\def\rank{\operatorname{rank}}
\def\diag{\operatorname{diag}}
\def\blkdiag{\operatorname{blkdiag}}
\def\det{\operatorname{det}}
\def\etr{\operatorname{etr}}
\def\Exp{\operatorname{E}}
\def\trace{\operatorname{tr}}
\def\argmin{\operatorname{argmin}}
\def\argmax{\operatorname{argmax}}

\def\SigmaB{\boldsymbol{\Sigma}}
\def\PsiB{\boldsymbol{\Psi}}

\def\Gras{\mathbb{G}r(M,{\mathbb{C}}^{T})}
\def\St{\mathbb{S}t(M,{\mathbb{C}}^{T})}
\def\dc{\mathrm{d}_{\mathrm{ch}}}
\def\Jac{\mathrm{Jac}\;}
\def\Grasb{\tilde{\mathbb{G}r}(M,{\mathbb{C}}^{T})}
\newcommand{\Id}{\mathrm{Id}}
\def\CU{{\mathcal U}}

\newcommand{\nachosays}[1] {{\bf{\textcolor{blue}{[Nacho SAYS: #1]}}}}
\newcommand{\carlossays}[1] {{\bf{\textcolor{red}{[Carlos SAYS: #1]}}}}
\newcommand{\javiersays}[1] {{\bf{\color{Green}{[Javier SAYS: #1]} }}}
\newcommand{\diegosays}[1] {{\bf{\color{Orange}{[Diego SAYS: #1]} }}}

\ifdefined\begintheorem\else
\newtheorem{thm}{Theorem}
\fi
\newtheorem{prop}{Proposition}
\newtheorem{cor}{Corollary}
\newtheorem{lem}{Lemma}
\newtheorem{rmk}{Remark}
\newtheorem{defi}{Definition}
\newtheorem{exmpl}{Example}

\def\proof{{\noindent\sc Proof. \quad}}
\newcommand{\proofof}[1]{{\noindent\sc Proof of #1. \quad}}
\def\eproof{{\mbox{}\hfill\qed}\medskip}

\def\qedsymbol{\vbox{\hrule\hbox{%
                     \vrule height1.3ex\hskip0.8ex\vrule}\hrule}}
\def\endproof{\qquad\qedsymbol\medskip\par}
\def\myendproof{\qquad\qedsymbol}

\section{Introduction and statement of the main results}
Let $T\geq 2M$ be two positive integers and consider the complex Grassmannian $\Gras$, i.e. the space of $M$--dimensional complex vector subspaces of $\mathbb C^T$.  Finite collections of points (also called {\em codes} or {\em packings}) in $\Gras$ with different desired separation properties have been investigated by several authors (in Section \ref{sec:history} we describe some relevant references). The most frequent criterium for ``well--separated'' codes is the maximization of the minimal mutual squared chordal distance, which is the sum the squared sines of the principal angles of two subspaces. However, following \cite{Hochwald00,Varanasi02,CuevasTCOM} (see also Section \ref{sec:aplicacion} below), a more relevant measure for its application to information theory is given by the {\em chordal product energy}, related to the product of the squared sines of the principal angles, which justifies its name.
Given a code $[\X_1],\ldots,[\X_k]\in\Gras$, its chordal product energy with parameter $N$ is
  \begin{equation}\label{eq:unionbound2}
\mathcal E(\X_1,\ldots,\X_K)= \sum_{i\neq j}\det ( \I_M -\X_i^H\X_j \X_j^H \X_i) ^{-N},
\end{equation}
where $\I_M$ is the identity matrix and we have chosen representatives $\X_i$ of each point $[\X_i]$ satisfying $\X_i^H\X_i=\I_M$. Note that the energy is well defined in the sense that it does not change if other representatives with that property are chosen.
Recall that the SVD
of $\X_i^H\X_j$, e.g. $\U\D\V^H$, is given in terms of the cosines of the principal angles between the subspaces $[\X_i]$ and $[\X_j]$, $\cos \theta_1, \ldots, \cos\theta_M$, cf. \cite{HanTIT06}, so
\begin{equation}
    \det \left( \I_M - \X_i^H\X_j \X_j^H\X_i \right) =\det \left( \I_M - {\bf D}^2 \right)= \prod_{i=1}^M \sin^2 \theta_i,
\end{equation} 
while the squared chordal distance between the two subspaces $[\X_i]$ and $[\X_j]$ is given by $\sum_{i=1}^M \sin^2 \theta_i$.
 The sum in \eqref{eq:unionbound2} is a pairwise interaction energy in the spirit of the well--studied Riesz or logarithmic energies of importance in Potential Theory (see \cite{SaffBook} for a complete monograph dedicated to energy minimization in the sphere and other spaces). We refer to the function (again, choosing representatives $\A$ and $\B$ such that $\A^H\A=\B^H\B=\I_M$)
\begin{equation}
    [\A],[\B]\in\Gras\mapsto\det ( \I_M -\A^H\B \B^H \A) =\det ( \I_M -\B^H\A \A^H \B) ,
    \label{eq:coherenceDet}
\end{equation}
as the {\em chordal product determinant} or, simply, the chordal product, and note that is {\em not} a metric in $\Gras$, for it may happen that $[\A]\neq[\B]$ and yet $\det ( \I_M -\A^H\B \B^H \A)=0$, if the intersection of $[\A]$ and $[\B]$ is nontrivial.

In this paper we perform the first theoretical study of the chordal product energy, for numerical results, see \cite{CuevasTCOM} and references therein. We start describing the context where the problem arises, following \cite{Hochwald00}.

\subsection{The importance of Grassmannian codes in information theory}\label{sec:aplicacion}
Consider a {\em transmitter}, i.e. some device that is able to send a signal, which is indeed a collection of numbers ordered in a complex $T\times M$ matrix $\X$. Physically, this corresponds to the setting where the transmitter has $M$ antennas and there is a total amount of $T$ time slots where the communication channel is assumed to be constant (i.e. the contour conditions of the communication are considered constant during the time that these $TM$ numbers are sent). The {\em receiver} is another device, that we consider equipped with $N$ antennas, and the signal it receives is 
\begin{equation*}
    \mathbf{Y} = \mathbf{X} \mathbf{H} + \sqrt{\frac{M}{T \rho}} \mathbf{W},
\end{equation*}
where $\mathbf{H}$ is an unknown $M\times N$ matrix (termed {\em the channel}), $\mathbf{W}$ describes the noise and $\rho$, called the signal-to-noise-ratio (SNR), measures the magnitude of the signal against the noise.
\subsubsection{The zero--noise case}
Since $\mathbf{H}$ is unknown (it is common to assume that it has random complex  Gaussian entries), even in the event that $\mathbf W=0$ the receiver cannot recover the whole matrix $\X$: 
\begin{itemize}
    \item If two matrices $\X_1$ and $\X_2$ have the same column span, then one can easily find an full--rank matrix $\mathbf H$ such that $\mathbf{X}_1 \mathbf{H}=\mathbf{X}_2 \mathbf{H}$, hence the receiver just cannot distinguish which of these two matrices was the original signal. 
    \item On the other hand, if two matrices $\X_1$ and $\X_2$ have the property that the intersection of the column span of $\X_1$ and $\X_2$ is trivial, the the receiver can easily distinguish if a given matrix $\Y$ has been constructed by $\X_1\mathbf{H}$ or by $\X_2\mathbf{H}$: if the column span of $\Y$ intersected with the column span of $\X_1$ (resp. $\X_2$) is nontrivial, then $\X_1$ (resp. $\X_2$) was sent.
\end{itemize} 
Summarizing, if a previously agreed code of possible signals $[\X_1],\ldots,[\X_K]\in\Gras$ is fixed with the property that the column spans of $\X_i$ and $\X_j$ have trivial intersection for $i\neq j$, the receiver will be able to recover, at least in the zero--noise scenario, {\em the element of the Grassmannian represented by the sent signal, but not the concrete representative of that element.} Hence, collections of points in $\Gras$ are searched with that property. 
\subsubsection{The general case}
In the more realistic context of the presence of non--zero noise, the analysis is quite more involved since there is always a non--zero probability of error in the detection procedure. The pioneer work \cite{Hochwald00} showed that, in order to recover the element $\X_i$ of $\Gras$ just by knowing $\Y$, the optimal method is to use the so called maximum--likelihood decoder:
\[
i=\argmax_{j=1,\ldots,K}\trace{(\Y^H\X_j\X_j^H\Y)}=
\argmax_{j=1,\ldots,K}\trace{(\X_j^H\Y\Y^H\X_j)},
\]
where $\cdot^H$ holds for Hermitian conjugate. Then, \cite{Varanasi01} showed that if only $2$ codewords are permitted, i.e. if $K=2$, and assuming that the entries of $\mathbf{H}$ and $\mathbf{W}$ are complex Gaussian $\mathcal{N}(0,1)$ numbers, then the probability $P_e(\X_1, \X_2,\rho)$ of erroneously decoding $\X_1$ if $\X_2$ was sent can be given by a (quite complicated) formula involving the residues of a certain rational function. Luckily, the asymptotic expansion of this {\em Pairwise Error Probability} (PEP) in the case $\rho\to\infty$, called the high-SNR asymptotic analysis, admits a much more concise expression, see  \cite{Varanasi02,CuevasTCOM}:
\begin{equation}
    P_e(\X_1, \X_2,\rho) \approx C \rho^{-NM}  \det ( \I_M -\X_1^H\X_2 \X_2^H \X_1) ^{-N},\quad \rho\to\infty,
    \label{eq:PEPbis}
\end{equation}
where $C= \frac12\left( \frac{4M}{T}\right)^{NM} \frac{(2NM-1)!!}{(2NM)!!)}$, it is assumed that any two distinct points have trivial intersection as linear subspaces, and the representatives $\X_i$ of each $[\X_i]$ are such that $\X_i^H\X_i=\I_M$. If we have $K$ elements $[\X_1],\ldots,[\X_K]$ in the code of possible signals and we assume that we send one of them at random, all with equal probability $1/K$, then the total probability of erroneously decoding a signal is bounded above by
\begin{equation}\label{eq:unionbound}
\frac1K\sum_{i\neq j}P_e(\X_i, \X_j,\rho)
\approx
\frac{C}{K} \rho^{-NM} \sum_{i\neq j}\det ( \I_M -\X_i^H\X_j \X_j^H \X_i) ^{-N}.
\end{equation}
The determinant in \eqref{eq:PEPbis} is the chordal product \eqref{eq:coherenceDet} and the sum in the right--hand side in \eqref{eq:unionbound} is the energy \eqref{eq:unionbound2}.

\subsubsection{Criteria for the design of Grassmannian codes}

It follows from the previous discussion that reasonable criteria for the design of a code $[\X_1],\ldots,[\X_K]$ would be to maximize the pairwise chordal product  \eqref{eq:coherenceDet},  or to minimize the chordal product  energy \eqref{eq:unionbound2}. In \cite{CuevasTCOM} these approaches are considered, numerically showing that the obtained codes are very well suited for their use in non--coherent communications, with a slight advantage in the use of the chordal product energy.  Yet, little or no theory exists about the behavior of the optimal pairwise chordal product or energy. The main purpose of this paper is to put the basis for the study of this question.

\subsection{Main results of the paper}
 We will start our study by computing the moments of the chordal product when $[\B]$ is fixed and $[\A]$ is chosen at random uniformly in $\Gras$, w.r.t. the unique, standard rotation--invariant probability measure. This yields a complete statistical characterization of the chordal product as a product of beta--distributed random variables:
\begin{thm}\label{th:expected}
  Assume that $T\geq 2M$. Let $p\in(2M-T-1,\infty)$ (notice that $p$ may be negative and/or noninteger). Let $[\B] \in\Gras$ be any fixed element and let $[\A] \in\Gras$ be uniformly distributed on the Grassmannian. Then, the $p$--th moment of $\det(\I_M-\B^H \A\A^H \B)$ is:
\begin{equation}\label{eq:detmoment}
  {\rm E}_{\A}[\det(\I_M-\B^H \A\A^H \B)^p] =
  \prod_{m=1}^M\frac{\Gamma(T-m+1)\Gamma(T+p-m-M+1)}{\Gamma(T-m-M+1)\Gamma(T+p-m+1)},
\end{equation}
where $\Gamma(\cdot)$ is Euler's Gamma function. Moreover, $\det(\I_M-\B^H \A\A^H \B)$ is distributed as the product of $M$ independent beta random variables, $z_m$, with parameters $\alpha_m = T-M+1-m$ and $\beta_m = M$, $m=1,\ldots,M$, i.e.
\begin{equation}
\det(\I_M-\B^H \A\A^H \B) \sim \prod_{m=1}^M z_m, \quad z_m \sim {\rm Beta} (T-M+1-m,M).
\end{equation}
\end{thm}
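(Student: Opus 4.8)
The plan is to reduce the computation to a matrix-variate integral over the Stiefel manifold and then invoke known results on the eigenvalue distribution of truncations of Haar-unitary matrices. First I would use the rotation invariance of the measure: since $[\B]$ is fixed and $[\A]$ is Haar-distributed, we may choose the representative $\B = \begin{pmatrix}\I_M\\ \mathbf{0}\end{pmatrix}$, so that $\B^H\A$ picks out the top $M\times M$ block of a Haar-random $T\times M$ matrix $\A$ with orthonormal columns (equivalently, the top-left $M\times M$ corner of a Haar-random $T\times T$ unitary). Writing $\A^H\B\B^H\A = \W$, the matrix $\W$ is an $M\times M$ Hermitian positive-semidefinite matrix, and $\det(\I_M - \W) = \prod_m (1-\lambda_m)$ where $\lambda_1,\ldots,\lambda_M$ are the eigenvalues of $\W$ — the squared cosines of the principal angles. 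So the quantity in question is a symmetric function of the eigenvalues of a Jacobi-type random matrix ensemble.

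The key step is to identify the joint density of $\lambda_1,\ldots,\lambda_M$. This is the classical matrix-variate beta (MANOVA/Jacobi) ensemble: the top-left $M\times M$ block $\W$ of the projection onto a uniformly random $M$-plane in $\mathbb{C}^T$ has a complex matrix-variate Beta distribution with parameters $(M, T-M)$, whose eigenvalue density on $[0,1]^M$ is proportional to $\prod_i \lambda_i^{0}\,(1-\lambda_i)^{T-2M}\prod_{i<j}(\lambda_i-\lambda_j)^2$, up to the normalizing Selberg-type constant. I would either cite this directly (Muirhead / Forrester, or the references already in the paper) or derive it quickly from the Jacobian of the map from the Stiefel manifold. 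Given this, the $p$-th moment is $\mathrm{E}\big[\prod_i (1-\lambda_i)^p\big]$, which is a Selberg integral: the ratio of the normalizing constant with exponent $T-2M+p$ on the $(1-\lambda_i)$ factors to the one with exponent $T-2M$. Evaluating the complex Selberg integral gives a ratio of products of Gamma functions; after cancellation and reindexing $m \mapsto M+1-m$ or similar, this should collapse to exactly the telescoping product in \eqref{eq:detmoment}. The convergence condition $p > 2M-T-1$ is precisely what makes the Selberg integral with exponent $T-2M+p > -1$ converge.

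For the distributional claim, the plan is to recognize the moment formula as a product over $m$ of ratios $\Gamma(\alpha_m+\beta_m)\Gamma(\alpha_m+p)/(\Gamma(\alpha_m)\Gamma(\alpha_m+\beta_m+p))$ with $\alpha_m = T-M+1-m$, $\beta_m = M$, which is exactly the $p$-th moment of a $\mathrm{Beta}(\alpha_m,\beta_m)$ variable; since the moments factor, and a product of independent betas has product moments equal to the product of the individual moments, the candidate law has the right moments. To conclude that the law is determined by its moments, I would note that the random variable is supported on $[0,1]$, so it is bounded and hence moment-determinate (Hausdorff moment problem), which upgrades the matching of moments to equality in distribution. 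Strictly, one should check that \eqref{eq:detmoment} does evaluate to $\prod_m \frac{\Gamma(\alpha_m+\beta_m)\Gamma(\alpha_m+p)}{\Gamma(\alpha_m)\Gamma(\alpha_m+\beta_m+p)}$ for the stated $\alpha_m,\beta_m$ — a direct substitution, checking $\alpha_m+\beta_m = T+1-m$, $\alpha_m + p = T-M+1-m+p$, etc.

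The main obstacle I anticipate is not conceptual but bookkeeping: getting the Selberg integral normalization exactly right (the complex Selberg integral has a specific Gamma-function product form, and one must be careful with the parameter-to-exponent dictionary and the $\prod_{i<j}|\lambda_i-\lambda_j|^2$ Vandermonde power in the complex case), and then performing the Gamma-function algebra so that the large ratio of Selberg constants telescopes cleanly into the compact product displayed in the theorem. An alternative route that sidesteps the Selberg integral entirely is an inductive "peeling" argument: condition on one column of $\A$ at a time, writing $\det(\I_M - \W)$ as a product of conditional factors each of which is a single ratio of quadratic forms in Gaussians and hence a single beta variable; this directly produces the product-of-independent-betas structure and may be the cleanest exposition, with the moment formula \eqref{eq:detmoment} following as a corollary.
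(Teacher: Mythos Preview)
Your proposal is correct and follows a genuinely different route from the paper's own proof. The paper does not invoke the Jacobi/MANOVA eigenvalue density or the Selberg integral at all. Instead, it parameterizes $\Gras$ by $\tilde\A\in\mathbb C^{(T-M)\times M}$ via the matrix--variate $t$ density of Proposition~\ref{prop:integrales}, rewrites the integrand so that $\det(\tilde\A^H\tilde\A)^p/\det(\I_M+\tilde\A^H\tilde\A)^{T+p}$ appears, and passes to polar and singular--value coordinates. The key step is a shift trick: after this change of variables the integral computing ${\rm E}(p,T)$ has exactly the same form as the normalization integral ${\rm E}(0,T+p)=1$, so ${\rm E}(p,T)=D(T)/D(T+p)$ is a ratio of explicit constants built from volumes of unitary groups. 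This is carried out first for \emph{integer} $p$ (so that $T+p$ is a legitimate dimension parameter in those volumes); non--integer $p$ is recovered only at the end, after the product--of--betas identification and the Hausdorff moment argument, which you also use. Your Selberg--integral route is more direct in that it handles all real $p>2M-T-1$ in one stroke and rests on a named classical formula, at the cost of importing the Jacobi eigenvalue density as a known fact; the paper's route is more self--contained (the matrix--$t$ parameterization is proved in its appendix) but needs the two--step integer/non--integer argument. Your alternative ``peeling'' suggestion, producing the independent beta factors by conditioning column by column, is different from both and would also work.
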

An immediate consequence is that, at least for moderate values of $N$, we can upper bound the energy \eqref{eq:unionbound2} and hence the probability of error \eqref{eq:unionbound} of random codes $[\X_1],\ldots,[\X_K]$ when they are all independently and uniformly distributed:
\begin{cor}\label{cor:optimalrandom}
Assume that $N\leq T-2M$. For i.i.d. chosen $[\X_1],\ldots,[\X_K]$, the expected value of the chordal product energy \eqref{eq:unionbound2} is
\[
K(K-1)\prod_{m=1}^M\frac{(T-m)!(T-N-m-M)!}{(T-m-M)!(T-N-m)!},
\]
In particular, there exists a code such that the union bound \eqref{eq:unionbound} is at most:
\[
C(K-1)\rho^{-NM}\prod_{m=1}^M\frac{(T-m)!(T-N-m-M)!}{(T-m-M)!(T-N-m)!},
\]
where $C= \frac12\left( \frac{4M}{T}\right)^{NM} \frac{(2NM-1)!!}{(2NM)!!)}$.
\end{cor}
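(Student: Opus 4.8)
The plan is to reduce the statement to Theorem~\ref{th:expected} by linearity of expectation together with a conditioning argument. First I would write the expected energy, using that the summands are nonnegative, as
\[
\Exp\left[\mathcal E(\X_1,\ldots,\X_K)\right]=\sum_{i\neq j}\Exp\left[\det(\I_M-\X_i^H\X_j\X_j^H\X_i)^{-N}\right].
\]
For a fixed pair $i\neq j$ I would condition on $[\X_j]$: given $[\X_j]$, the point $[\X_i]$ is still uniformly distributed on $\Gras$, so Theorem~\ref{th:expected} applied with $[\B]=[\X_j]$, $[\A]=[\X_i]$ and $p=-N$, together with the symmetry $\det(\I_M-\A^H\B\B^H\A)=\det(\I_M-\B^H\A\A^H\B)$ recorded in \eqref{eq:coherenceDet}, gives
\[
{\rm E}_{[\X_i]}\left[\det(\I_M-\X_i^H\X_j\X_j^H\X_i)^{-N}\right]=\prod_{m=1}^M\frac{\Gamma(T-m+1)\Gamma(T-N-m-M+1)}{\Gamma(T-m-M+1)\Gamma(T-N-m+1)},
\]
where ${\rm E}_{[\X_i]}$ denotes the expectation over $[\X_i]$ with $[\X_j]$ held fixed. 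Two routine verifications enter here: that $p=-N$ lies in the admissible range $(2M-T-1,\infty)$, which is precisely the hypothesis $N\leq T-2M$; and that under this hypothesis the four families of Gamma arguments $T-m+1$, $T-N-m-M+1$, $T-m-M+1$, $T-N-m+1$ (for $1\leq m\leq M$) are all positive integers, so each Gamma value collapses to a factorial and the product above becomes $\prod_{m=1}^M\frac{(T-m)!(T-N-m-M)!}{(T-m-M)!(T-N-m)!}$.

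The key point is that this conditional expectation does not depend on the value of $[\X_j]$, so it also equals the unconditional expectation of the $(i,j)$ term. Summing over the $K(K-1)$ ordered pairs $i\neq j$ then yields the asserted closed form for $\Exp[\mathcal E]$. For the second claim I would invoke the elementary averaging principle that a random variable attains some value not exceeding its mean: since $\mathcal E(\X_1,\ldots,\X_K)\geq 0$ has the finite expectation just computed, there is a realization $[\X_1],\ldots,[\X_K]$ with $\mathcal E(\X_1,\ldots,\X_K)\leq\Exp[\mathcal E]$, and since (because $T\geq 2M$) the event that some pair of codewords has nontrivial intersection is null, this realization can be chosen so that in addition all pairwise intersections are trivial, so that \eqref{eq:PEPbis}--\eqref{eq:unionbound} are valid for it. Substituting $\mathcal E\leq\Exp[\mathcal E]$ into the union bound \eqref{eq:unionbound} and simplifying $\frac{C}{K}\cdot K(K-1)=C(K-1)$ produces the stated estimate, with the same constant $C$.

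I do not expect a genuine obstacle here: the corollary is essentially an immediate consequence of Theorem~\ref{th:expected}. The only aspects that require attention are the bookkeeping showing that $N\leq T-2M$ simultaneously makes $p=-N$ admissible in the theorem and makes every Gamma argument a positive integer (so the answer is a ratio of factorials), and the remark that the conditional expectation over $[\X_i]$ is independent of the conditioning variable, which is what lets us avoid any computation involving the joint law of the code.
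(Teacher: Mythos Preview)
Your proposal is correct and matches the paper's approach: the paper presents this corollary as ``an immediate consequence'' of Theorem~\ref{th:expected} without a separate proof, and your argument---linearity of expectation, applying Theorem~\ref{th:expected} with $p=-N$ conditionally on one codeword, checking that $N\le T-2M$ both places $-N$ in the admissible range and makes all Gamma arguments positive integers, then invoking the first-moment method for the existence claim---is exactly the intended unpacking of that remark. Your added observation that the null event of nontrivial pairwise intersection can be avoided (so that \eqref{eq:PEPbis}--\eqref{eq:unionbound} apply to the chosen realization) is a sensible point of rigor that the paper leaves implicit.
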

In Section \ref{sec:pdfcdf} we use Theorem \ref{th:expected} to compute exactly the probability density function and the cumulative density function of the random variable $\det(\I_M-\B^H\A\A^H\B)$ in the same hypotheses of the theorem. The expressions we get are exact and can be obtained in closed form for any fixed value of $M$. A reduced version of that result for $M=2$ is now shown:
\begin{cor}\label{cor:M2} Fix any $[\B]\in \mathbb{G}r(2,\mathbb C^T)$ with $T\geq4$. The probability that a randomly chosen $[\A]\in\mathbb{G}r(2,\mathbb C^T)$ satisfies $\det(\I_2-\B^H\A\A^H\B)\leq\delta\in(0,1]$ is exactly:
\[
F_2(\delta,T)=\frac12(T-1)(T-2)^2(T-3)\delta^{T-3}\left(\frac{1}{T-3}-\frac{2\delta}{(T-2)^2} -\frac{\delta^2}{T-1}+\frac{2\delta\log\delta}{T-2} \right),
\]
and similar formulas can be computed for the probability density functions $F_M(\delta,T)$ for higher values of $M$ (the case $M=1$ which yields $F_1(\delta,T)=\delta^{T-1}$ is quite trivial but it also follows from our approach).
\end{cor}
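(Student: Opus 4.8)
The plan is to derive this directly from the \emph{distributional} part of Theorem~\ref{th:expected}. Specializing to $M=2$, the random variable $\det(\I_2-\B^H\A\A^H\B)$ has the same law as a product $z_1z_2$ of two independent beta variables with $z_1\sim\mathrm{Beta}(T-2,2)$ and $z_2\sim\mathrm{Beta}(T-3,2)$. Since $\Gamma(2)=1$, the density of a $\mathrm{Beta}(\alpha,2)$ variable is the polynomial $f_\alpha(x)=\alpha(\alpha+1)x^{\alpha-1}(1-x)$ on $(0,1)$, and its cumulative distribution function is $G_\alpha(t)=(\alpha+1)t^{\alpha}-\alpha\,t^{\alpha+1}$ on $[0,1]$ (note $G_\alpha(1)=1$). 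So everything reduces to computing the law of a product of two explicit densities.

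Concretely, for $\delta\in(0,1]$ I would write
\[
F_2(\delta,T)=\mathrm{Prob}(z_1z_2\le\delta)=\int_0^1 f_{T-2}(x)\,\mathrm{Prob}\!\left(z_2\le\tfrac{\delta}{x}\right)\mathrm{d}x
\]
and split the outer integral at $x=\delta$: for $x\le\delta$ one has $\delta/x\ge1$, so the inner probability is $1$ and this part contributes $G_{T-2}(\delta)$; for $x\in(\delta,1]$ it contributes $\int_\delta^1 f_{T-2}(x)\,G_{T-3}(\delta/x)\,\mathrm{d}x$. Plugging in the explicit polynomials, the powers $x^{T-3}$ and $x^{T-2}$ of the integrand cancel against the $x^{-(T-3)}$ and $x^{-(T-2)}$ coming from $G_{T-3}(\delta/x)$, leaving an integrand that is a linear combination of $(1-x)$ and $(1-x)/x$. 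The only non-polynomial contribution is therefore $\int_\delta^1\frac{1-x}{x}\,\mathrm{d}x=\delta-1-\log\delta$, which accounts for the $\delta\log\delta$ term in the statement. Collecting everything, $F_2(\delta,T)$ becomes a combination of the four monomials $\delta^{T-3}$, $\delta^{T-2}$, $\delta^{T-2}\log\delta$ and $\delta^{T-1}$, and elementary simplification of the coefficients yields the displayed closed form.

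I do not expect a genuine obstacle here, only careful bookkeeping; as sanity checks one can verify $F_2(1,T)=1$ and that the leading behaviour as $\delta\to0^+$ is $\sim\frac12(T-1)(T-2)^2\delta^{T-3}$. The same scheme handles general $M$: either peel off the variables $z_3,z_4,\dots$ one at a time, integrating over the region $\{x_1\cdots x_M\le\delta\}\cap(0,1)^M$, or, more systematically, invert the Mellin transform whose value at $p+1$ is given in \eqref{eq:detmoment}, expressing the density as a Meijer $G$-function and the CDF as a finite sum of residues; the logarithmic powers appear exactly when poles coincide. The case $M=1$ is immediate, as $\mathrm{Beta}(T-1,1)$ has CDF $\delta^{T-1}$.
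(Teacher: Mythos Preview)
Your argument is correct. Conditioning on $z_1$, splitting at $x=\delta$, and observing that the powers of $x$ cancel against those in $G_{T-3}(\delta/x)$ so that only $\int_\delta^1(1-x)\,dx$ and $\int_\delta^1\frac{1-x}{x}\,dx=\delta-1-\log\delta$ survive is exactly right, and the bookkeeping does produce the stated $F_2$. Your sanity checks ($F_2(1,T)=1$ and the $\delta^{T-3}$ leading term) are fine.

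The paper, however, does not compute the product CDF this way. It derives $F_2$ (as a special case of Corollary~\ref{cor:pdfdet}) via a moment/Mellin argument: the $p$-th moment from Theorem~\ref{th:expected} is a rational function of $T+p$ whose denominator factors as $\prod_{m}(T+p-m)^{\mu_m}$, one performs the partial-fraction expansion of this rational function, and then uses the identity
\[
\frac{1}{(T+p-m)^{l}}=\frac{(-1)^{l-1}}{(l-1)!}\int_0^1 x^{p}\,x^{T-m-1}\log^{l-1}x\,dx
\]
to read off the density term by term; integrating gives $F_M$. For $M=2$ the repeated pole at $T+p-2$ is what produces the $\delta\log\delta$ term. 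Your route is the more elementary, purely probabilistic one and is arguably cleaner for $M=2$; the paper's partial-fraction scheme is less hands-on but scales uniformly to arbitrary $M$ and makes transparent why and where the powers of $\log x$ appear (they come from higher-order poles). Your closing remark about inverting the Mellin transform and getting a Meijer $G$-function is in fact the paper's general method, just phrased differently.
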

See the complete result in Corollary
\ref{cor:pdfdet}. As an illustrative example, Figs. \ref{Fig:PDFs} and \ref{Fig:CDFs} depict, respectively, the computed pdf and cdf of the chordal product for different values of $T$ and $M$.

\begin{figure}
    \centering
\includegraphics[width=.60\textwidth]{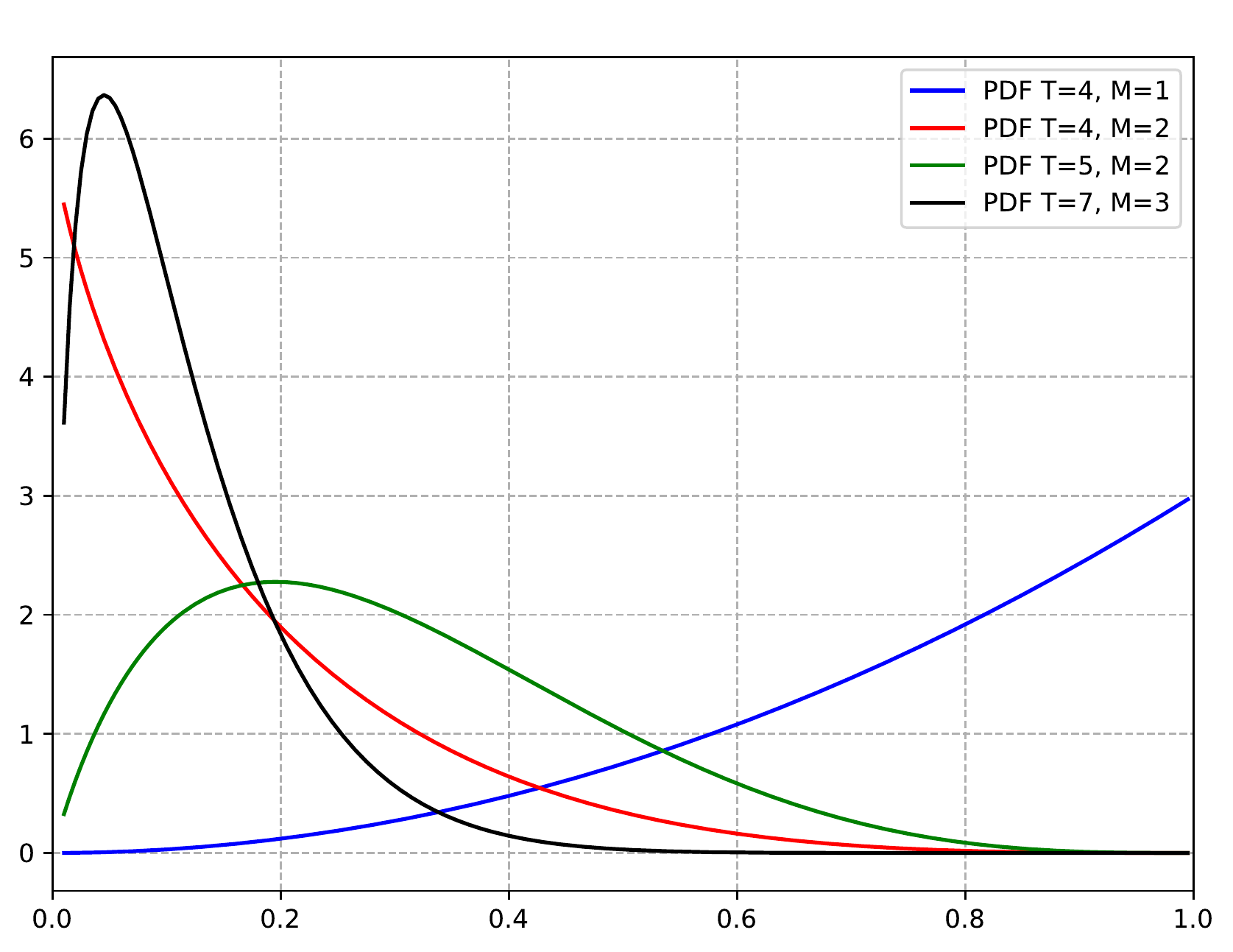}
     \caption{Probability density functions of $\det(\I_M-\B^H \A\A^H \B)$, when $[\B]\in\Gras$ is fixed and $[\A]\in\Gras$ is uniformly distributed on the Grassmannian.}
	\label{Fig:PDFs}
\end{figure}

\begin{figure}
    \centering
\includegraphics[width=.60\textwidth]{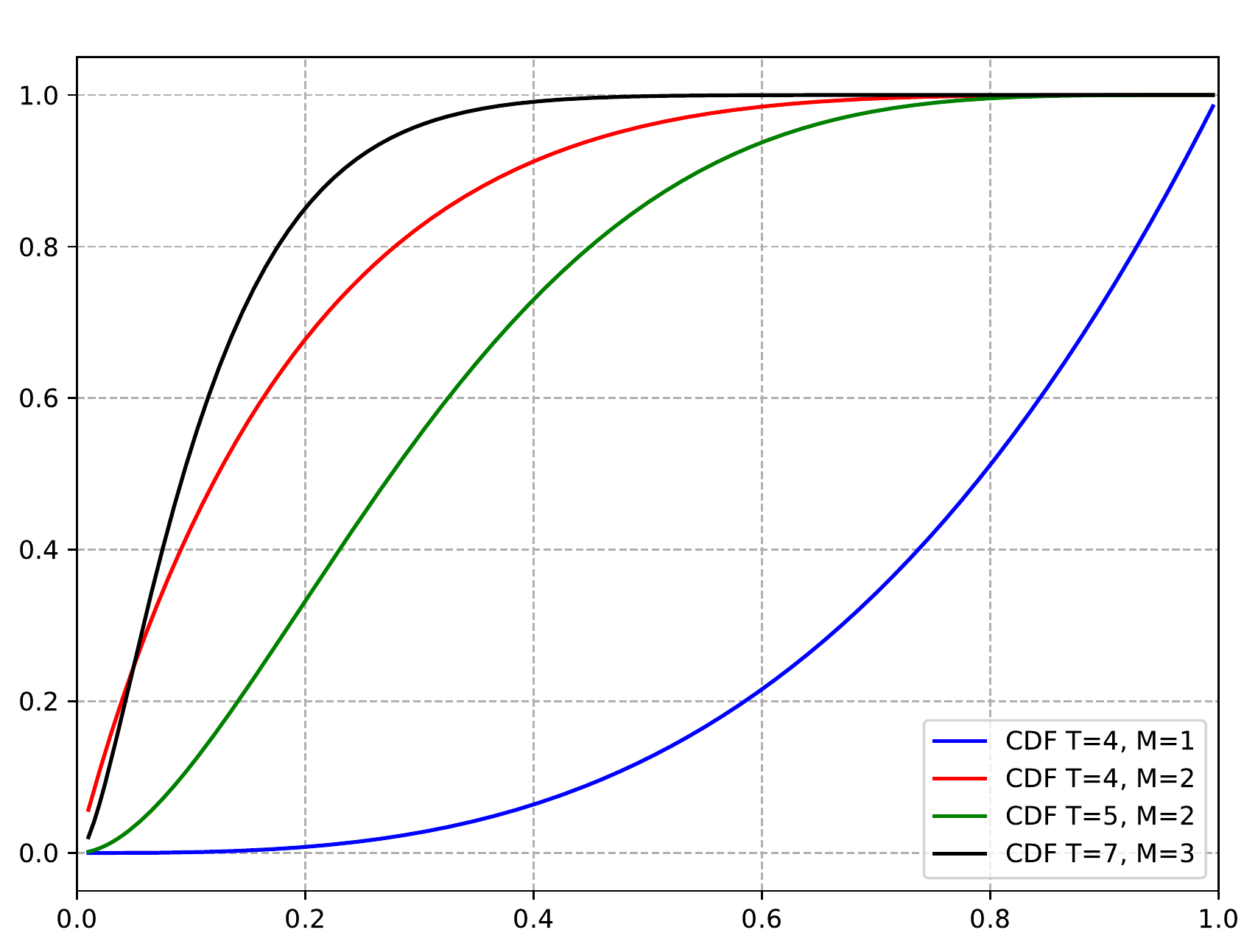}
     \caption{Cumulative distribution function of $\det(\I_M-\B^H \A\A^H \B)$, when $[\B]\in\Gras$ is fixed and $[\A]\in\Gras$ is uniformly distributed on the Grassmannian.}
	\label{Fig:CDFs}
\end{figure}
 
Using the statistical characterization above, we have derived a lower bound
on the number of elements in any code {in the Grassmannian} with a given minimum value of chordal product $\delta$. Following \cite{Barg_TIT02}, we call this result a Gilbert-Varshamov bound since its proof mimics the argument of that classical result.
\begin{cor}[Gilbert--Varshamov lower bound]\label{cor:lowerbounddet}Assume that $T\geq2M$. 
For any fixed $K\geq2$, there exists a code $[\X_1],\ldots,[\X_K]\in\Gras$ such that $\det(\I_M-\X_i^H\X_j\X_j^H\X_i)\geq\delta$ where $\delta$ is the unique solution of the equation:
  $$
  F_M(\delta;T)= \frac{1}{K};\text{ that is }\delta=F_M^{-1}\left(\frac{1}{K};T\right)
  $$
  Equivalently, given $\delta\in(0,1)$, there exists a code consisting of $K\geq\frac{1}{F_M(\delta;T)}$ elements and satisfying $\det(\I_M-\X_i^H\X_j\X_j^H\X_i)\geq\delta$ for $i\neq j$.
  \end{cor}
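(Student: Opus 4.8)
The plan is to adapt the classical Gilbert--Varshamov packing argument, the only probabilistic input being the cumulative distribution function $F_M(\delta;T)$ of Theorem~\ref{th:expected} --- that is, the probability that a uniformly random $[\A]\in\Gras$ satisfies $\det(\I_M-\B^H\A\A^H\B)\le\delta$, computed in closed form in Corollary~\ref{cor:pdfdet}.

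First I would record the analytic properties of $F_M(\cdot;T)$ that make the statement meaningful. By Theorem~\ref{th:expected}, $\det(\I_M-\B^H\A\A^H\B)$ is distributed as a product of $M$ independent ${\rm Beta}(T-M+1-m,M)$ variables; since $T\ge 2M$ every parameter $T-M+1-m\ge T-2M+1\ge 1$ and $M\ge 1$ is positive, so each factor is absolutely continuous with support $[0,1]$, and hence so is the product. Therefore $F_M(\cdot;T)$ is atomless and $F_M(\cdot;T)\colon(0,1)\to(0,1)$ is a continuous, strictly increasing bijection with $F_M(0^+;T)=0$ and $F_M(1^-;T)=1$. In particular, for every $K\ge 2$ we have $1/K\in(0,1)$, so the equation $F_M(\delta;T)=1/K$ has a unique solution $\delta=F_M^{-1}(1/K;T)\in(0,1)$, and the two displayed formulations of the corollary are equivalent; fix this $\delta$ from now on.

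Next I would exhibit a \emph{maximal} code with pairwise chordal product $\ge\delta$ and check it is finite. The map $([\A],[\B])\mapsto\det(\I_M-\A^H\B\B^H\A)$ is a well-defined continuous function on the compact space $\Gras\times\Gras$ that vanishes on the diagonal, so $\{([\A],[\B]):\det(\I_M-\A^H\B\B^H\A)\ge\delta\}$ is closed and disjoint from the diagonal; by compactness there is $\varepsilon=\varepsilon(\delta,M,T)>0$ with $\det(\I_M-\A^H\B\B^H\A)<\delta$ whenever the chordal distance between $[\A]$ and $[\B]$ is below $\varepsilon$. Hence any collection with pairwise chordal product $\ge\delta$ is $\varepsilon$-separated, so finite, and a code $\mathcal C=\{[\X_1],\dots,[\X_L]\}$ of maximum cardinality $L$ exists.

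Finally comes the covering step. By maximality, for every $[\A]\in\Gras$ the collection $\mathcal C\cup\{[\A]\}$ fails the constraint, so either $[\A]=[\X_i]$ for some $i$ or $\det(\I_M-\X_i^H\A\A^H\X_i)<\delta$ for some $i$ (strict, since the value $\delta$ would itself be admissible). Thus the open sets $B_i:=\{[\A]\in\Gras:\det(\I_M-\X_i^H\A\A^H\X_i)<\delta\}$ together with the finite set $\{[\X_1],\dots,[\X_L]\}$ cover $\Gras$. Let $\mu$ be the rotation--invariant probability measure. By Theorem~\ref{th:expected} the law of $\det(\I_M-\X_i^H\A\A^H\X_i)$ for $[\A]\sim\mu$ does not depend on $[\X_i]$, and being atomless it gives $\mu(B_i)=F_M(\delta;T)$ for each $i$; since the finite set is $\mu$-null, subadditivity yields $1=\mu(\Gras)\le\sum_{i=1}^L\mu(B_i)=L\,F_M(\delta;T)$, i.e. $L\ge 1/F_M(\delta;T)=K$. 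Discarding all but $K$ points of $\mathcal C$ produces a code of size $K$ with $\det(\I_M-\X_i^H\X_j\X_j^H\X_i)\ge\delta$ for $i\ne j$, which is the first assertion, and the last assertion is just the observation that $\mathcal C$ itself has $L\ge 1/F_M(\delta;T)$ elements. I expect the only mildly delicate points to be the atomlessness and strict monotonicity of $F_M$ (so that the inverse exists and $\mu(B_i)$ equals, not merely is bounded by, $F_M(\delta;T)$) and the compactness argument producing a finite maximal code; given Theorem~\ref{th:expected}, the rest is routine.
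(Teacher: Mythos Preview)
Your proof is correct and follows essentially the same Gilbert--Varshamov covering argument as the paper: take a maximal code, observe that the ``balls'' $\{[\A]:\det(\I_M-\X_i^H\A\A^H\X_i)\le\delta\}$ must cover $\Gras$, and use subadditivity together with $\mu(B_i)=F_M(\delta;T)$ to get $L\ge 1/F_M(\delta;T)=K$. You are simply more careful than the paper about the auxiliary points (atomlessness and strict monotonicity of $F_M$, and the existence and finiteness of a maximal code via compactness), which the paper leaves implicit.
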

    \begin{exmpl}
  In the case $M=1$ we have that
  \[
  \det(\I_M- \X_i^H\X_j \X_j^H\X_i) = \sin^2 \theta,
  \]
  where $\theta\in[0,\pi/2]$ is the principal angle between the one-dimensional subspaces $[\X_i]$ and $[\X_j]$ in $\mathbb{G}(1,\mathbb{C}^T)$. That is to say, the chordal product coincides with the squared chordal distance. For uniformly distributed subspaces, the squared sine of the pairwise principal angle has cdf $F_1(\delta,T) = \delta^{T-1}$. The Gilbert-Varshamov bound shows that, for $\delta\in(0,1)$, there exist codes with cardinality $K$ and minimum chordal product $\delta = \sin^2 \theta$ such that
  \begin{equation}
      K > \delta^{-(T-1)} = \left( \sin \theta \right)^{-2(T-1)}.
  \end{equation}
  \end{exmpl}
\begin{exmpl}
   Let us now take $T=10, M=2$. Assume that we want to allocate $K=2^{9}$ points in $\Gras$. Then, Corollary \ref{cor:lowerbounddet} says that there exists a code $[\X_1],\ldots,[\X_K]$ such that for all these points the chordal product is at least $\delta$, the unique solution of:
  $$
  \frac12(T-1)(T-2)^2(T-3)\delta^{T-3}\left(\frac{1}{T-3}-\frac{2\delta}{(T-2)^2} -\frac{\delta^2}{T-1}+\frac{2\delta\log\delta}{T-2} \right)=\frac1{K},
  $$
  that is
  $$
  \log_2\left(2016\delta^7\left(\frac17-\frac{2\delta}{64}-\frac{\delta^2}{9}+ \frac{\delta\log\delta}{4}\right)\right)=-9,
  $$
  which yields $\delta\approx0.2129$. The numerical algorithm in \cite{CuevasTCOM} produces in this case $[\X_1],\ldots,[\X_{2^9}]$ with minimum determinantal value $0.3958>0.2129$.
  \end{exmpl}
  

\subsection{Historical discussion}\label{sec:history}
There exist several results on packings on Grassmannian spaces but they are rather centered in finding codes such that the mutual chordal distance between different elements $[\X_i],[\X_j]$ is close to maximal. For example, in \cite{rankin} we find bounds for the mutual distance of any code with a fixed number of elements (this is known as the Rankin bound). Gilbert--Varshamov bounds have also been obtained for that chordal distance by resorting to calculations of the volume of a metric ball of radius $\delta$ in $\Gras$, see \cite{Barg_TIT02} and \cite{Dai_TIT08}. The case that $\delta$ is sufficiently small was analyzed in \cite{Dai_TIT08}, \cite{Henkel2005}, and the real case has also been studied, see \cite{Conway96} and references therein. But to our knowledge our results are the first theoretical bounds on codes focusing on the explicit use of the chordal product, which is the key figure of merit in non--coherent communications.

In the case $M=1$ the Grassmannian becomes the projective space, the chordal product equals the squared chordal distance, and the literature is much more prolific, going back to \cite{Shannon_59} (although Shannon studied the case of the geodesic, not chordal, distance), \cite{Barg_TIT02} for the chordal and the geodesic distance and more recently \cite{jasper2019game} where a more complete set of references can be found. The optimal value of \eqref{eq:unionbound2} in that case has been studied in \cite{BE2018} and \cite{Anderson} as a case of Riesz energy, showing that the minimum value is equal to the average computed in Corollary \ref{cor:optimalrandom}, minus a term of the form
\[
O\left(K^{1+\frac{N}{T-1}}\right)=o(K^2),\quad\text{ for }N\leq T-2.
\]

\section{Proof of Theorem  \ref{th:expected}}\label{sec:statcharact}

First assume that $p$ is an integer in the range of the hypotheses. Let ${\mathrm E}(p,T)$ be the expected value in the theorem (we omit the dependence on $M$ in the notation). By unitary invariance, we can assume that $\B=\binom{\I_M} {\bf 0}$. If we write the expected value using Proposition \ref{prop:integrales} and we pass to polar coordinates we get
\begin{align*}
{\rm E}(p,T)=&C(T)\int_{\tilde{\A}\in\mathbb{C}^{(T-M)\times M}}\frac{\det\left(\I_M-(\I_M+\tilde \A^H\tilde \A)^{-1}\right)^p}{\det(\I_M+\tilde \A^H\tilde \A)^{T}}\,d\tilde{\A}\\
&=C(T)\int_{\tilde{\A}\in\mathbb{C}^{(T-M)\times M}}\frac{\det\left( \A^H\tilde \A\right)^p}{\det(\I_M+\tilde \A^H\tilde \A)^{T+p}}\,d\tilde{\A}\\
=&C(T)\int_0^\infty\rho^{2M(T-M)+2p M-1}\int_{\underset{\|\tilde{\A}\|_F=1}{\tilde \A\in\mathbb{C}^{(T-M)\times M}}}\frac{\det(\tilde \A^H\tilde \A)^p }{\det(\I_M+\rho^2\tilde \A^H\tilde \A)^{T+p}}\,d\tilde{\A}\,d\rho,
\end{align*}
where we omit the dependence on $M$ in the constant:
$$
C(T)=\frac{1}{Vol(\Gras)}\stackrel{Lemma~\ref{lem:volumenG}}{=}\frac{(T-M)!\cdots(T-1)!}{\pi^{M(T-M)}1!\cdots(M-1)!}.
$$
Since the integrand of the inner integral depends only on the singular values of $\tilde \A$, we can take it to the set $\mathbb S_M^+$ consisting of ordered tuples of positive numbers $\sigma_1>\ldots>\sigma_M$ with the property that $\sigma_1^2+\cdots+\sigma_M^2=1$, see for example \cite[Th. 3.3]{IMAJNA}, that yields
\begin{multline*}
{\rm E}(p,T)=D(T)\times\\\int_0^\infty\rho^{2M(T-M)+2p M-1} \int_{\mathbb S_M^+}\frac{(\sigma_1\cdots\sigma_M)^{2p+2T-4M+1}\prod_{j\neq k}(\sigma_k^2-\sigma_j^2)^2 }{\prod_{m=1}^M(1+\rho^2\sigma_m^2)^{T+p}}\,d\sigma_1\cdots d\sigma_M\,d\rho,
\end{multline*}
where
\begin{align*}
  D(T) =& \frac{C(T)Vol(\CU_{T-M})Vol(\CU_M)}{Vol(\CU_{T-2M})2^{M(T-M)}\pi^M}.
\end{align*}
It follows immediately that ${\rm E}(p,T)/D(T)= {\rm E}(0,T+p)/D(T+p)=1/D(T+p)$, that is,
\begin{align*}
   {\rm E}(p,T)=& \frac{D(T)}{D(T+p)} \\
  =& \frac{C(T)Vol(\CU_{T-M}) Vol(\CU_M)}{2^{M(T-M)}Vol(\CU_{T-2M})}\frac{2^{M(T+p-M)}Vol(\CU_{T+p-2M})}{C(T+p) Vol(\CU_{T+p-M}) Vol(\CU_M)}
   \\
  =&  \frac{(2\pi)^{Mp}(T-M)!\cdots(T-1)!Vol(\CU_{T-M}) Vol(\CU_{T+p-2M})}{Vol(\CU_{T-2M})(T+p-M)!\cdots(T+p-1)! Vol(\CU_{T+p-M})}.
\end{align*}
The volume of the unitary group is known (see \cite[p. 28]{IMAJNA}): 
$$
Vol(\CU_k)=\frac{(2\pi)^{k(k+1)/2}}{1!\cdots(k-1)!}.
$$
The theorem (for integer $p$ in the range) follows by substituting the known values in the constants above.

On the other hand, it is known that the $p$th moment of a beta distributed random variable with parameters $\alpha>0$ and $\beta >0$ denoted as $x \sim {\rm Beta}(\alpha,\beta)$ is \cite{Srivastavabook} 
\begin{equation}
\label{eq:betamoments}
{\rm E}[x^p] = \frac{\Gamma(\alpha+\beta)\Gamma(\alpha+p)}{\Gamma(\alpha+\beta+p)\Gamma(\alpha)},
\end{equation}
so the $m$th product term in \eqref{eq:detmoment} corresponds to the $p$th moment of a beta distributed random variable with parameters $\alpha_m = T-M+1-m$ and $\beta_m = M$, thus proving that the distribution of $\det(\I_M-\B^H \A\A^H \B)$ is equivalent to the distribution of the product of $M$ independent beta random variables (this is an instance of the Hausdorff moments problem, hence the distribution is uniquely determined by its moments). Notice that \eqref{eq:betamoments} is valid for $p+ \alpha >0$, and since $m$ can get up to $M$ this entails to $p>2M-T-1$. Now that we have characterized  $\det(\I_M-\B^H \A\A^H \B)$ as a product of beta distributed random variables, we can write down the formula for its moments for noninteger $p>2M-T-1$, finishing the proof of the theorem.





\section{Probability density function of the chordal product}\label{sec:pdfcdf}

Can we effectively recover the pdf of the random variable $x= \det(\I_M-\B^H \A\A^H \B)$ from its moments? If the density function is $f(x)$ and the moments are $\mathcal M_n$ then we have the classical formula:
\begin{equation}\label{eq:pdffrommoments}
  f(x)=\int_{-\infty}^\infty e^{2i\pi xs}\sum_{n=0}^\infty\frac{(-2i\pi s)^n}{n!}\mathcal M_n\,ds.
\end{equation}
Following \cite[Th. 7]{SpringerThomson} a closed-form expression for the pdf can actually be written down in terms of certain special functions called Meijer $G$--functions. However, this expression is quite involved and requires extra work in practice for the derivation of bounds. In the following, we show that we can obtain simpler closed-form formulas for small values of $M =1,2,3$. They represent the most practical use cases in noncoherent communications. Moreover, we also provide a general recursive procedure to obtain the pdfs for higher values:
\begin{cor}\label{cor:pdfdet}
 Let $T\geq2M$. The probability density function (pdf) of $\det(\I_M-\B^H \A\A^H \B)$, when $[\B]\in\Gras$ is fixed and $[\A]\in\Gras$ is uniformly distributed on the Grassmannian, for $M = 1,2,3$ is:
 \begin{align*}
 M=1 \, \to \,  f_{1}(x;T)=&(T-1)x^{T-2},\\
  M=2 \, \to \,  f_{2}(x;T)=&\frac12(T-1)(T-2)^2(T-3)x^{T-4}\left(1-x^2+2x\log x\right),\\
 M=3 \, \to \,  f_{3}(x;T)=&\frac1{288}(T-1)(T-2)^2(T-3)^3(T-4)^2(T-5)x^{T-6}\times\\
 &\left(1+80x-162x^2+80x^3+x^4+24x\log x-24x^3\log x-36x^2\log^2x\right)
 \end{align*}
The cumulative distribution function (cdf) $F_M(x,T)=\int_0^xf_M(s,T)\,ds$ for these three cases is respectively:
 \begin{align*}
 M=1\to&F_{1}(x;T)=x^{T-1}\\
 M=2\to&F_{2}(x;T)=\frac12(T-1)(T-2)^2(T-3)x^{T-3}\left(\frac{1}{T-3}-\frac{2x}{(T-2)^2} -\frac{x^2}{T-1}+\frac{2x\log x}{T-2} \right)\\
 M=3\to&F_{3}(x;T)=\frac1{288}(T-1)(T-2)^2(T-3)^3(T-4)^2(T-5)x^{T-5}\times Q,
  \end{align*}
  with
  \begin{multline*}
    Q=\frac{1}{T-5}
 +\frac{80x}{T-4}
 -\frac{24x}{(T-4)^2}
 -\frac{162x^2}{T-3}
 -\frac{72x^2}{(T-3)^3}
 +\frac{80x^3}{T-2}\\
 +\frac{24x^3}{(T-2)^2}
 +\frac{x^4}{T-1}
 +\frac{24x\log x}{T-4}
 +\frac{72x^2\log x}{(T-3)^2}
 -\frac{24x^3\log x}{T-2}
 -\frac{36x^2\log^2x}{T-3}.
  \end{multline*}
  For arbitrary higher values of $M$ the pdf has the form:
  \begin{align*}
     & f_M(x;T) = (T-M)^M\prod_{m=1}^{M-1}(T-m)^m(T-M-m)^{M-m}\cdot\\
      &\left[
      \sum_{m=1}^M\sum_{l=1}^m\frac{A_{ml}(-1)^{l-1}}{(l-1)!}x^{T-m-1}\log^{l-1}x + 
      \sum_{m=1}^{M-1}\sum_{l=1}^{M-m}\frac{B_{ml}(-1)^{l-1}}{(l-1)!}x^{T-m-M-1}\log^{l-1}x
      \right]
  \end{align*}
  where the $M^2$ coefficients $A_{ml}, B_{ml}$ can be obtained (e.g. with the aid of symbolic computation software) by solving the linear system of $M^2$ equations resulting from equating coefficients on both sides for the polynomial identity:
\begin{align*}
&\sum_{m=1}^M\sum_{l=1}^m A_{ml}(x-m)^{m-l}\prod_{i\neq m}^M (x-i)^i\prod_{i=1}^{M-1}(x-M-i)^{M-i} + \\
& +\sum_{m=1}^{M-1}\sum_{l=1}^{M-m} B_{ml}(x-M-m)^{M-m-l}\prod_{i=1}^M (x-i)^i\prod_{i\neq m}^{M-1}(x-M-i)^{M-i} = 1.
\end{align*}
 \end{cor}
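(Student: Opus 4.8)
The plan is to work on the Mellin transform side, where Theorem~\ref{th:expected} already supplies everything in closed form. Write $x=\det(\I_M-\B^H\A\A^H\B)$, a random variable supported on $[0,1]$, and let $f_M(x;T)$ denote its density, so that its Mellin transform is $\tilde f_M(s):=\int_0^1 x^{s-1}f_M(x;T)\,dx={\rm E}_\A[x^{s-1}]$. Setting $p=s-1$ in \eqref{eq:detmoment} and repeatedly applying $\Gamma(z+1)=z\Gamma(z)$ to collapse each ratio $\Gamma(T-m-M+s)/\Gamma(T-m+s)$ into $\prod_{j=0}^{M-1}(s+T-M-m+j)^{-1}$, one sees that $\tilde f_M$ is a \emph{proper rational function}:
\[
\tilde f_M(s)=c_{M,T}\prod_{m=1}^{M}\prod_{j=0}^{M-1}\frac{1}{s+T-M-m+j},\qquad c_{M,T}=\prod_{m=1}^{M}\prod_{k=0}^{M-1}(T-m-k),
\]
and grouping the repeated linear factors of $c_{M,T}$ gives $c_{M,T}=(T-M)^M\prod_{m=1}^{M-1}(T-m)^m(T-M-m)^{M-m}$, the constant appearing in the statement.

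Next I would recover $f_M$ by inverting the Mellin transform: for any $c>2M-T$ (the fundamental strip $\operatorname{Re}(s)>2M-T$ is exactly the range where \eqref{eq:detmoment} holds),
\[
f_M(x;T)=\frac{1}{2\pi i}\int_{c-i\infty}^{c+i\infty}\tilde f_M(s)\,x^{-s}\,ds,\qquad x\in(0,1).
\]
Since $\tilde f_M$ is rational with denominator of degree $M^2\ge1$ and $x^{-s}=e^{-s\log x}$ stays bounded (indeed $\to0$) as $\operatorname{Re}(s)\to-\infty$ because $\log x<0$, the contour may be closed to the left, reducing the integral to a finite sum of residues, $f_M(x;T)=\sum_{s_0}\operatorname{Res}_{s=s_0}[\tilde f_M(s)x^{-s}]$, over the poles $s_0$ of $\tilde f_M$.

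The poles are the $2M-1$ integers $s_0\in\{2-T,\dots,2M-T\}$: a short index count shows that $s_0=-(T-m-1)$, $m=1,\dots,M$, has multiplicity $m$ (the ``$A$''-poles) while $s_0=-(T-M-m-1)$, $m=1,\dots,M-1$, has multiplicity $M-m$ (the ``$B$''-poles), the two families meeting at the apex of the triangular multiplicity profile $1,2,\dots,M-1,M,M-1,\dots,1$. Expanding $\tilde f_M$ in partial fractions, $\tilde f_M(s)=c_{M,T}\sum_{s_0}\sum_{i\ge1}c_{s_0,i}(s-s_0)^{-i}$, and using the elementary residue $\operatorname{Res}_{s=s_0}[x^{-s}(s-s_0)^{-i}]=\frac{(-\log x)^{i-1}}{(i-1)!}x^{-s_0}$, one obtains
\[
f_M(x;T)=c_{M,T}\sum_{s_0}\sum_{i\ge1}c_{s_0,i}\,\frac{(-\log x)^{i-1}}{(i-1)!}\,x^{-s_0},
\]
which after relabelling $c_{s_0,i}\mapsto A_{ml}$ on the $A$-poles and $c_{s_0,i}\mapsto B_{ml}$ on the $B$-poles is precisely the asserted general form. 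To determine the $M^2$ coefficients, clear denominators in the partial-fraction identity: multiplying through by $\prod_{s_0}(s-s_0)^{\ell(s_0)}$ and performing the affine substitution $s=w-T+1$ that renders the denominator $T$-free turns it into the displayed polynomial identity $\sum A_{ml}(\cdots)+\sum B_{ml}(\cdots)=1$ in the variable $w$; equating the $M^2$ coefficients of $1,w,\dots,w^{M^2-1}$ yields $M^2$ linear equations whose unique solvability is guaranteed by the uniqueness of the partial-fraction decomposition.

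For $M=1,2,3$ I would run this procedure explicitly: after telescoping there are at most five poles, and computing the residues — a single derivative at a double pole when $M=2$, derivatives up to order two when $M=3$ — gives $f_1,f_2,f_3$ in the stated form, whereupon the cdfs follow by term-by-term integration via the antiderivative $\int s^a\log^b s\,ds=s^{a+1}\sum_{k=0}^b\frac{(-1)^k b!}{(b-k)!\,(a+1)^{k+1}}\log^{b-k}s$, which is exactly what produces the denominators $(T-m)$, $(T-m)^2$, $(T-m)^3$ in $F_M$. As a sanity check (and in fact the quickest route once the formulas are in hand), one can verify directly that $\int_0^1 x^p f_M(x;T)\,dx$ reproduces \eqref{eq:detmoment} using $\int_0^1 x^{p+a}\log^b x\,dx=(-1)^b b!/(p+a+1)^{b+1}$, and then invoke uniqueness of a compactly supported distribution with prescribed moments (the Hausdorff moment problem, as already noted after Theorem~\ref{th:expected}). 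The conceptual content here is slight; the main obstacle is the combinatorial bookkeeping of the two pole families — pinning down the multiplicities and the affine change of variable so that the residue expansion matches the claimed $A_{ml}/B_{ml}$ pattern and the stated polynomial identity term by term — while the legitimacy of the Mellin inversion, though it should be stated, is routine precisely because $\tilde f_M$ turns out to be a proper rational function, so no analytic estimates and no spurious pole cancellations intervene.
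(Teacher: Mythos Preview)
Your argument is correct and, at its core, coincides with the paper's: both recognize that the moment function \eqref{eq:detmoment} is a proper rational function of $p$ (equivalently of $s=p+1$), perform the partial-fraction decomposition whose $M^2$ coefficients are the $A_{ml},B_{ml}$, and exploit the identity linking $(s-s_0)^{-l}$ with $x^{-s_0}\log^{l-1}x$ to read off the density. The only difference is packaging. You frame this as Mellin inversion, closing the contour to the left and computing residues, while the paper stays on the real line: it writes each partial-fraction term as $\frac{1}{(T+p-m)^l}=\frac{(-1)^{l-1}}{(l-1)!}\int_0^1 x^{p}\,x^{T-m-1}\log^{l-1}x\,dx$, sums, and identifies the resulting integrand as the density by uniqueness of the Hausdorff moment problem---exactly the ``sanity check'' you mention at the end, which in the paper is the actual proof. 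Your route is slightly more conceptual (it explains \emph{why} monomials times powers of $\log x$ appear, as residues of higher-order poles of $x^{-s}$) and makes the contour-closing step explicit; the paper's route is marginally more elementary in that no complex analysis is invoked. Either way the bookkeeping of the triangular multiplicity profile $1,2,\dots,M,\dots,2,1$ and the affine shift $s\mapsto s+T-1$ is the same, and for $M=1,2,3$ both reduce to the same short verification.
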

 
 \begin{proof}

For $M=1$ and integer $p\geq0$ note that
$$
\int_0^1x^p \underbrace{(T-1)x^{T-2}}_{f_{1}(x;T)}\,dx= \frac{T-1}{T+p-1},
$$
and hence the claimed pdf satisfies Theorem \ref{th:expected} and must be the searched distribution. With the help of some integral formulas for the $\log$ function it is easy to check that
$$
\int_0^1x^pf_{2}(x;T)\,dx=\frac{(T-1)(T-2)^2(T-3)}{(T+p-1)(T+p-2)^2(T+p-3)},
$$
which again satisfies Theorem \ref{th:expected} and we are done. A more lengthy but trivial computation gives the case $M=3$.

These formulas and the general case for higher values of $M$ can be derived from the following procedure. The moments of the chordal product determinant from Theorem \ref{th:expected} are
$$
\mathcal{M}_p(T,M) = \prod_{m=1}^M\frac{(T-m)!(T+p-m-M)!}{(T-m-M)!(T+p-m)!}. 
$$
By expanding the factorials and collecting terms in the product, this can be rewritten as
$$
\mathcal{M}_p(T,M) = \prod_{m=1}^M\left(\frac{T-m}{T+p-m}\right)^m\;\cdot\;\prod_{m=1}^{M-1}\left(\frac{T-M-m}{T+p-M-m}\right)^{M-m}
$$
so the denominator $D$ is the value at $x=T+p$ of the polynomial
$$
D(x) = \prod_{m=1}^M(x-m)^m\prod_{m=1}^{M-1}(x-M-m)^{M-m}.
$$
Notice that this is a product of all-different real root factors $(x-\alpha)$ with varying multiplicities, so its inverse has a partial fraction decomposition
$$
\frac{1}{D} = \sum_{m=1}^M\sum_{l=1}^m\frac{A_{ml}}{(x-m)^l} + \sum_{m=1}^{M-1}\sum_{l=1}^{M-m}\frac{B_{ml}}{(x-M-m)^l},
$$
for some coefficients $A_{ml},\, B_{ml}\in\mathbb{R}$. Following one of the usual procedures to solve for these coefficients, for general $x$, multiplying by $D(x)$ on both sides yields the polynomial equation of order $M^2-1$:
\begin{align*}
&\sum_{m=1}^M\sum_{l=1}^m A_{ml}(x-m)^{m-l}\prod_{i\neq m}^M (x-i)^i\prod_{i=1}^{M-1}(x-M-i)^{M-i} + \\
& +\sum_{m=1}^{M-1}\sum_{l=1}^{M-m} B_{ml}(x-M-m)^{M-m-l}\prod_{i=1}^M (x-i)^i\prod_{i\neq m}^{M-1}(x-M-i)^{M-i} = 1.
\end{align*}
Expanding and gathering terms by powers of $x$, one can equate the coefficient of $x^0$ to $1$ and the coefficients of $x^n$, for $n=1,\dots, M^2-1$, to $0$ to obtain a linear system of $M^2$ equations in the $M^2$ coefficients $A_{ml},\, B_{ml}$, and solve for them.

Now, notice that by the Laplace transform properties for $a,b$ nonnegative integers
$$
\int_0^1 x^a\log^b x\,dx = (-1)^b\mathcal{L}[t^b](a+1) = (-1)^b\frac{b!}{(a+1)^{b+1}},
$$
and thus
$$
\frac{1}{(T+p-m)^l}=\frac{(-1)^{l-1}}{(l-1)!}\int_0^1 x^p x^{T-m-1}\log^{l-1}x\,dx.
$$
Hence, by expressing every term of the partial fraction decomposition in this integral form, the function $f_M(x;T)$ can be identified inside the moment function written as an integral:
\begin{align*}
  \mathcal{M}_p(T,M) = & (T-M)^M\prod_{m=1}^{M-1}(T-m)^m(T-M-m)^{M-m}\cdot\\
  & \cdot\int_0^1 x^p\left[
  \sum_{m=1}^M\sum_{l=1}^m\frac{A_{ml}(-1)^{l-1}}{(l-1)!}x^{T-m-1}\log^{l-1}x + \right.\\
  &\left. \sum_{m=1}^{M-1}\sum_{l=1}^{M-m}\frac{B_{ml}(-1)^{l-1}}{(l-1)!}x^{T-m-M-1}\log^{l-1}x
  \right] dx,
\end{align*}
where the factors in front of the integral all come from the numerator over $D$ in $\mathcal{M}_h(T,M)$. This finishes the proof of the corollary.
 \end{proof}
 

\section{Proof of Corollary \ref{cor:lowerbounddet}} \label{sec:GilbertVarshamov}

  Let $\delta$ be the unique solution of the equation $F_M(\delta;T)=K^{-1}$. Let $G$ be the maximum number of points in $\Gras$ that can be allocated with the claimed property. We must prove that $G\geq K$. Indeed, assume that $G<K$ an let $\X_1,\ldots,\X_G$ be a code with $\det(\I_M-\X_i^H\X_j\X_j^H\X_i)\geq\delta$ for all $1\leq i,j\leq G$. We note that
  \begin{multline*}
    \frac{1}{Vol(\Gras)}Vol\left(\cup_{i=1}^G\{[\A]\in\Gras:\det(\I_M-\X_i^H\A\A^H\X_i)\leq\delta\}\right)\leq\\\frac{1}{Vol(\Gras)}\sum_{i=1}^G Vol\left(\{[\A]\in\Gras:\det(\I_M-\X_i^H\A\A^H\X_i)\leq\delta\}\right)=\\
    GF_M(\delta;T)=\frac{G}{K}<1,
\end{multline*}
  and we thus deduce that there exists $[\A]\in\Gras$ such that
  $$
  \det(\I_M-\X_i^H\A\A^H\X_i)>\delta\quad \forall\;1\leq i\leq G.
  $$
  But then the code $\X_1,\ldots,\X_G,\X_{G+1}$ with $\X_{G+1}=\A$ also satisfies the claimed property and has $G+1$ points, which contradicts the definition of $G$.

\vspace{6pt} 

\appendix

\section[Alternative parameterization of $\Gras$ ]{Alternative parameterization of the Grassmannian and the density function of $\tilde \A$ in $\binom{\I_M}{\tilde \A}\in\Gras$}\label{sec:parametrization}
We recall the volume of the Grassmannian for completeness.
\begin{lem}\label{lem:volumenG}
The volume of the Grassmannian $\Gras$ is:
$$
Vol(\Gras)=\frac{\pi^{M(T-M)}1!\cdot 2!\cdots(M-1)!}{(T-M)!\cdot (T-M+1)!\cdots(T-1)!}
$$
\end{lem}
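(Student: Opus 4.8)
The plan is to avoid computing a Riemannian volume form from scratch and instead to integrate the known density of $\Gras$ in its affine chart. Recall that the map $\tilde\A\in\mathbb{C}^{(T-M)\times M}\mapsto\big[\binom{\I_M}{\tilde\A}\big]\in\Gras$ is a bijection onto a full--measure subset, and that in these coordinates the Riemannian density is $\det(\I_M+\tilde\A^H\tilde\A)^{-T}$ (this is exactly the parametrization underlying the change of variables in the proof of Theorem~\ref{th:expected}, and the subject of this appendix). Hence
\begin{equation*}
Vol(\Gras)=\int_{\tilde\A\in\mathbb{C}^{(T-M)\times M}}\frac{d\tilde\A}{\det(\I_M+\tilde\A^H\tilde\A)^{T}},
\end{equation*}
and the whole lemma reduces to evaluating this one matrix integral.

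I would evaluate it in two classical steps. First, substitute $\S=\tilde\A^H\tilde\A$, a positive--definite $M\times M$ Hermitian matrix; integrating out the Stiefel factor of $\tilde\A$ gives, for $p\geq M$ and any integrable $f$,
\begin{equation*}
\int_{\mathbb{C}^{p\times M}}f(\tilde\A^H\tilde\A)\,d\tilde\A=\frac{\pi^{pM}}{\Gamma_M(p)}\int_{\S>0}f(\S)\det(\S)^{\,p-M}\,d\S,
\end{equation*}
where $d\S$ is Lebesgue measure on Hermitian matrices and $\Gamma_M(a)=\pi^{M(M-1)/2}\prod_{j=1}^M\Gamma(a-j+1)$ is the complex multivariate Gamma function (for $M=1$ this is just polar coordinates in $\mathbb{C}^p$). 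Taking $p=T-M$ (admissible since $T\geq 2M$), this yields $Vol(\Gras)=\pi^{M(T-M)}\Gamma_M(T-M)^{-1}\int_{\S>0}\det(\S)^{\,T-2M}\det(\I_M+\S)^{-T}\,d\S$. Second, recognize this as a complex matrix--variate Beta integral of the second kind,
\begin{equation*}
\int_{\S>0}\det(\S)^{\,a-M}\det(\I_M+\S)^{-(a+b)}\,d\S=\frac{\Gamma_M(a)\Gamma_M(b)}{\Gamma_M(a+b)},\qquad a,b>M-1,
\end{equation*}
applied with $a=T-M$ and $b=M$. The factor $\Gamma_M(T-M)$ cancels, leaving $Vol(\Gras)=\pi^{M(T-M)}\Gamma_M(M)/\Gamma_M(T)$; finally, expanding $\Gamma_M(M)=\pi^{M(M-1)/2}\,0!\,1!\cdots(M-1)!$ and $\Gamma_M(T)=\pi^{M(M-1)/2}(T-1)!(T-2)!\cdots(T-M)!$, the powers of $\pi$ coming from the multivariate Gammas cancel and the claimed expression remains.

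The routine parts (the two matrix integrals and the algebraic simplification of the Gammas) are standard in the theory of complex Wishart / matrix--variate distributions, so the main obstacle is really a bookkeeping one: making sure the metric normalization is the one implicit in the density $\det(\I_M+\tilde\A^H\tilde\A)^{-T}$ --- it is precisely this normalization that produces $\pi^{M(T-M)}$ rather than $(2\pi)^{M(T-M)}$. As an independent cross--check one may instead use the Riemannian submersion $\CU_M\times\CU_{T-M}\to\CU_T\to\Gras$, giving $Vol(\Gras)=Vol(\CU_T)/\big(Vol(\CU_M)\,Vol(\CU_{T-M})\big)$, and substitute the known formula for $Vol(\CU_k)$ used elsewhere in the paper; this reproduces the same product of factorials and, after the appropriate rescaling of the bi--invariant metric, the same power of $\pi$. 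I would keep the chart computation as the main proof, since it is self--contained given the density formula recalled here.
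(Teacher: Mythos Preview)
Your computation is correct, but it is a genuinely different argument from the paper's. The paper's proof is a one--liner: it simply invokes the Riemannian quotient $\Gras=\St/\CU_M$, so that $Vol(\Gras)=Vol(\St)/Vol(\CU_M)$, and cites Hua for the classical values of these volumes. Your main route instead integrates the chart Jacobian $\det(\I_M+\tilde\A^H\tilde\A)^{-T}$ over $\mathbb{C}^{(T-M)\times M}$ and evaluates the resulting matrix--variate Beta integral in closed form; this is longer but self--contained, in that it does not appeal to an external table of Stiefel or unitary volumes, and it has the pleasant side effect of providing an independent check of the normalizing constant in Proposition~\ref{prop:integrales}. One point of logical ordering: in the paper, the chart density you start from is established in Proposition~\ref{prop:integrales}, which appears \emph{after} Lemma~\ref{lem:volumenG}; there is no circularity because the proof of Proposition~\ref{prop:integrales} is independent of Lemma~\ref{lem:volumenG} (it cites \cite{GrassLattice}), but if you adopt your argument you should either swap the order or state the unnormalized change--of--variables formula first. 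Finally, your ``cross--check'' via $Vol(\CU_T)/\big(Vol(\CU_M)\,Vol(\CU_{T-M})\big)$ is essentially the paper's approach, except that the paper uses $Vol(\St)/Vol(\CU_M)$ directly, which avoids the metric rescaling you correctly flag (with the paper's normalization of $Vol(\CU_k)$, the triple quotient produces $(2\pi)^{M(T-M)}$ rather than $\pi^{M(T-M)}$).
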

\begin{proof}
  This is a classical fact: since the Grassmannian is formally defined as a quotient of the Stiefel manifold $\St$ (i.e. the set of $T\times M$ complex matrices $\X$ such that $\X^H\X=\I_M$) by the unitary group $\mathcal U_M$, the volume of $\Gras$ is the quotient of the volumes of the Stiefel and unitary matrices which is well--known, see for example \cite{Hua} (note that there exist several normalizations for the Riemannian structure of the classical groups, leading to different volume formulas. We use the standard that considers $\St$ and $\mathcal U_M$ as submanifolds of their ambient affine spaces, with the inherited structure).
\end{proof}
Recall that a $p\times n$ complex matrix $\X$ is distributed as a complex matrix-variate $t$ distribution with $\nu$ degrees of freedom when its density is given by
\begin{equation}
    p(\X) = C^{-1} \det(\I_p + \X^H\X)^{-(\nu +p+n-1)}.
\end{equation}
\begin{prop}\label{prop:integrales}
Let $T\geq 2M$. If $[\A]$ is uniformly distributed in $\Gras$ and we write $[\A]=\begin{bmatrix}{\I_M}\\{\tilde \A}\end{bmatrix}$ (note that there exists a unique representative of that form), then $\tilde \A$ has density
\begin{equation}
\frac{1}{Vol(\Gras)\det(\I_M+\tilde \A^H\tilde \A)^{T}}.\label{eq:matrix_t_dist}
\end{equation}
Hence, $\tilde\A\in\mathbb C^{(T-M)\times M}$ follows a matrix--variate $t$ distribution with $\nu=1$ degrees of freedom.
In other words, for any measurable non--negative or integrable function $f:\Gras\to\mathbb{C}$,
\begin{align*}
\int_{[\A]\in\Gras}f([\A])\,d[\A]=&\int_{\tilde{\A}\in\mathbb{C}^{(T-M)\times M}}\frac{f\binom{\I_M}{\tilde \A}}{\det(\I_M+\tilde \A^H\tilde \A)^{T}}\,d\tilde{\A}
\\
=&\int_{\tilde{\A}\in\mathbb{C}^{(T-M)\times M}}\frac{f\binom{(\I_M+\tilde\A^H\tilde\A)^{-1/2}}{\tilde \A(\I_M+\tilde\A^H\tilde\A)^{-1/2}}}{\det(\I_M+\tilde \A^H\tilde \A)^{T}}\,d\tilde{\A}.
\end{align*}
\end{prop}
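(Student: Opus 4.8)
The map $\tilde\A\mapsto\big[\binom{\I_M}{\tilde\A}\big]$ is a diffeomorphism from $\mathbb C^{(T-M)\times M}$ onto the open dense ``big cell'' of $\Gras$ formed by those $M$--planes whose orthogonal projection onto the coordinate plane $\mathbb C^M\times\{0\}$ is an isomorphism; its complement is a proper (Schubert) subvariety and hence has measure zero for the uniform measure. So it suffices to compute the Riemannian volume form of $\Gras$ in this chart and divide by $Vol(\Gras)$ (Lemma \ref{lem:volumenG}) to get the density of the normalized uniform measure; the asserted integral identity is then just the change of variables, with the second displayed form obtained by replacing the affine representative $\binom{\I_M}{\tilde\A}$ by the orthonormal one $\binom{(\I_M+\tilde\A^H\tilde\A)^{-1/2}}{\tilde\A(\I_M+\tilde\A^H\tilde\A)^{-1/2}}$, which spans the same point of $\Gras$.

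To compute the volume form I would take the orthonormal section $Y(\tilde\A)=\binom{\I_M}{\tilde\A}P^{-1/2}$ with $P:=\I_M+\tilde\A^H\tilde\A$, and use the standard realization of $\Gras$ as the Riemannian quotient $\St/\CU_M$ recalled in Lemma \ref{lem:volumenG}: a tangent vector at $[Y]$ is the horizontal projection $(\I_T-YY^H)\dot Y$ of a Stiefel tangent vector, with the induced Frobenius norm. Differentiating $Y(\tilde\A)$ in a direction $\delta\in\mathbb C^{(T-M)\times M}$, the contribution of $\partial(P^{-1/2})$ is annihilated by $\I_T-YY^H$ because $\binom{\I_M}{\tilde\A}^H(\I_T-YY^H)=0$ (using $\binom{\I_M}{\tilde\A}^H\binom{\I_M}{\tilde\A}=P$). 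A short block computation then shows that the pulled--back metric is the Hermitian form
\[
\delta\longmapsto \tr\!\Big(\delta^H\big(\I_{T-M}-\tilde\A P^{-1}\tilde\A^H\big)\delta\,P^{-1}\Big).
\]

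Viewed as a real quadratic form on $\mathbb R^{2M(T-M)}$, its Gram determinant is the square of $\det$ of the Hermitian matrix $P^{-T}\otimes(\I_{T-M}-\tilde\A P^{-1}\tilde\A^H)$; Sylvester's identity gives $\det(\I_{T-M}-\tilde\A P^{-1}\tilde\A^H)=\det\!\big(P^{-1}(P-\tilde\A^H\tilde\A)\big)=\det P^{-1}$, and the Kronecker--product rule contributes the exponents $T-M$ and $M$, so that
\[
\sqrt{\det{}_{\mathbb R}}=(\det P^{-1})^{T-M}(\det P^{-1})^{M}=\det(\I_M+\tilde\A^H\tilde\A)^{-T}.
\]
Hence $dV_{\Gras}=\det(\I_M+\tilde\A^H\tilde\A)^{-T}\,d\tilde\A$ in the chart (the normalization fixed by the value at $\tilde\A=0$, where both sides are Lebesgue measure), which after dividing by $Vol(\Gras)$ is exactly \eqref{eq:matrix_t_dist}. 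Comparing with the displayed matrix--variate $t$ density for a $(T-M)\times M$ matrix, i.e. with $p=T-M$ and $n=M$, the exponent $\nu+p+n-1$ equals $T$ precisely when $\nu=1$.

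The only real obstacle is the linear--algebra bookkeeping: keeping the horizontal projection honest so the $P^{-1/2}$ derivative genuinely drops out, and carefully tracking the passage from the complex Hermitian Gram matrix to the real Jacobian (the squaring, and the Kronecker exponents $T-M$ and $M$), where one is easily off by a power of $\det P$. An essentially equivalent alternative would be to verify directly that $\det(\I_M+\tilde\A^H\tilde\A)^{-T}\,d\tilde\A$ is invariant under the fractional--linear action of $\CU_T$ on the chart and invoke uniqueness of the $\CU_T$--invariant probability measure on $\Gras$; this trades the metric computation for a Jacobian computation of those Möbius maps, of comparable difficulty.
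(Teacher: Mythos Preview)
Your argument is correct. The chart computation is carried out cleanly: the key points---that the derivative of $P^{-1/2}$ lies in the kernel of the horizontal projector because $(\I_T-YY^H)\binom{\I_M}{\tilde\A}=0$, the Sylvester step $\det(\I_{T-M}-\tilde\A P^{-1}\tilde\A^H)=\det P^{-1}$, and the passage from the Hermitian Gram matrix to the real volume element---are all handled correctly, and the exponent $T$ comes out right.

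The paper, however, does not prove this proposition from scratch: it simply cites \cite[Prop.~1 and Cor.~1]{GrassLattice}, noting that both integrals there are shown to equal the intermediate form
\[
\int_{\|\W\|_{op}<1} f\!\left(\begin{bmatrix}\sqrt{\I_M-\W^H\W}\\\W\end{bmatrix}\right)\,d\W,
\]
over the bounded Cartan domain, and then observes that the two displayed right-hand sides coincide because $f$ is defined on the Grassmannian. So your route is genuinely different: you compute the Riemannian volume form directly in the unbounded affine chart, whereas the cited argument passes through the bounded ``disk'' model. Your approach has the advantage of being self-contained and making the matrix-$t$ density appear in one step; the paper's route, via $\W$, has the minor benefit that the representative $\binom{\sqrt{\I_M-\W^H\W}}{\W}$ is already Stiefel and that the $\W$-coordinates tie in naturally with the singular-value (principal-angle) decomposition used in the proof of Theorem~\ref{th:expected}. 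The invariance alternative you mention at the end would also work and is comparable in effort.
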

\begin{proof}
  This result has been proved in \cite[Prop. 1 and Cor. 1]{GrassLattice}, by showing that both sides of the equality are equal to
  \[
  \int_{\W\in \mathbb C^{(T-M)\times M}, \|\W\|_{op}<1}f\left(\begin{bmatrix}\sqrt{\I_M-\W^H\W}\\\W\end{bmatrix}\right)\,d\W,
  \]
  with $\|\cdot\|_{op}$ the operator norm. Note that the two integrals on $\C^{(T-M)\times M}$ are equal since $f$ is a function defined in the Grassmannian and hence its value is independent of the choice of representatives. Moreover, the advantage of the last expression in the proposition is that 
  $$
  \X=\binom{(\I_M+\tilde\A^H\tilde\A)^{-1/2}}{\tilde \A(\I_M+\tilde\A^H\tilde\A)^{-1/2}}
  $$
  is a Stiefel matrix, i. e. it satisfies $\X^H\X=\I_M$.
\end{proof}

\bibliographystyle{alpha}
\addcontentsline{toc}{section}{References}
\bibliography{main}

\Addresses

\end{document}